\newcommand{\mathsym}[1]{{}}
\let\oldbfseries=\bfseries
\let\oldmdseries=\mdseries
\let\oldnormalfont=\normalfont
\renewcommand{\bfseries}{\oldbfseries\boldmath}
\renewcommand{\mdseries}{\oldmdseries\unboldmath}
\renewcommand{\normalfont}{\oldnormalfont\unboldmath}
\numberwithin{equation}{section}
\newcommand\hypersetup[1]{}\fi
\DeclareMathSymbol{\Gamma}{\mathalpha}{letters}{"00}
\DeclareMathSymbol{\Delta}{\mathalpha}{letters}{"01}
\DeclareMathSymbol{\Theta}{\mathalpha}{letters}{"02}
\DeclareMathSymbol{\Lambda}{\mathalpha}{letters}{"03}
\DeclareMathSymbol{\Xi}{\mathalpha}{letters}{"04}
\DeclareMathSymbol{\Pi}{\mathalpha}{letters}{"05}
\DeclareMathSymbol{\Sigma}{\mathalpha}{letters}{"06}
\DeclareMathSymbol{\Upsilon}{\mathalpha}{letters}{"07}
\DeclareMathSymbol{\Phi}{\mathalpha}{letters}{"08}
\DeclareMathSymbol{\Psi}{\mathalpha}{letters}{"09}
\DeclareMathSymbol{\Omega}{\mathalpha}{letters}{"0A}
\newcommand{\gen}[1]{\mathrm{#1}}
\newcommand{\ii}{\mathrm{i}}
\newcommand*\widebar[1]{%
  \hbox{%
    \vbox{%
      \hrule height 0.5pt 
      \kern0.25ex
      \hbox{%
        \kern-0.3em
        \ensuremath{#1}%
        \kern-0.1em
      }%
    }%
  }%
}
\newcommand{\ket}[1]{\left|#1\right\rangle}      
\newcommand{\bra}[1]{\left\langle #1\right|}     
\newcommand{\alg}[1]{\mathfrak{#1}}
\newcommand{\beq}{\begin{equation}}
\newcommand{\eeq}{\end{equation}}
\def\[{\begin{equation}}
\def\]{\end{equation}}
\def\<{\begin{eqnarray}}
\def\>{\end{eqnarray}}
\newtheorem{mydef}{Definition}
\newtheorem{theorem}{Theorem}
\newtheorem{remark}{Remark}
\newtheorem{corollary}{Corollary}
\newtheorem{example}{Example}
\def\mr@ignsp#1 {\ifx\:#1\@empty\else #1\expandafter\mr@ignsp\fi}%
\newcommand{\multiref}[1]{\begingroup
\xdef\mr@no@sparg{\expandafter\mr@ignsp#1 \: }%
\def\mr@comma{}%
\@for\mr@refs:=\mr@no@sparg\do{\mr@comma\def\mr@comma{,}\ref{\mr@refs}}%
\endgroup}
\newcommand{\hypref}[2]{\ifx\href\asklfhas #2\else\href{#1}{#2}\fi}
\newcommand{\Secref}[1]{Section~\multiref{#1}}
\newcommand{\Appref}[1]{Appendix~\multiref{#1}}
\newcommand{\appref}[1]{App.~\multiref{#1}}
\renewcommand{\eqref}[1]{(\multiref{#1})}
\newlength{\apb@width}
\newcommand{\autoparbox}[2][c]{\settowidth{\apb@width}{#2}\parbox[#1]{\apb@width}{#2}}
\asklfhas\newcommand{\href}[2]{#2}\fi
\begin{document}

\renewcommand{\thefootnote}{\fnsymbol{footnote}}
\thispagestyle{empty}
\begin{flushright}\footnotesize
ITP-UU-13/32 \\
SPIN-13/24
\end{flushright}
\vspace{1cm}

\begin{center}%
{\Large\bfseries%
\hypersetup{pdftitle={Twisted Heisenberg chain and the six-vertex model with DWBC}}%
Twisted Heisenberg chain and \\ the six-vertex model with DWBC%
\par} \vspace{2cm}%

\textsc{W. Galleas}\vspace{5mm}%
\hypersetup{pdfauthor={Wellington Galleas}}%

\textit{Institute for Theoretical Physics and Spinoza Institute, \\ Utrecht University, Leuvenlaan 4,
3584 CE Utrecht, \\ The Netherlands}\vspace{3mm}%

\verb+w.galleas@uu.nl+ %

\par\vspace{3cm}

\textbf{Abstract}\vspace{7mm}

\begin{minipage}{12.7cm}
In this work we establish a relation between the six-vertex model with 
Domain Wall Boundary Conditions (DWBC) and the $XXZ$ spin chain with anti-periodic
twisted boundaries. More precisely, we demonstrate a formal relation between the zeroes
of the partition function of the six-vertex model with DWBC and the zeroes of the transfer matrix
eigenvalues associated with the six-vertex model with a particular non-diagonal boundary twist.

\hypersetup{pdfkeywords={Heisenberg chain, domain wall boundaries, functional equations}}%
\hypersetup{pdfsubject={}}%
\end{minipage}
\vskip 2cm
{\small PACS numbers:  05.50+q, 02.30.IK}
\vskip 0.1cm
{\small Keywords: Heisenberg chain, domain wall boundaries, functional equations}
\vskip 2cm
{\small October 2014}

\end{center}

\newpage
\renewcommand{\thefootnote}{\arabic{footnote}}
\setcounter{footnote}{0}

\tableofcontents

\section{Introduction}
\label{sec:intro}

One of the remarkable roles played by integrable systems is the establishment of connections
between seemingly unrelated topics. For instance, although the relation between one-dimensional
quantum spin chains and two-dimensional classical vertex models is nowadays clear, this remarkable relation has its origins in 
Lieb's observation that the ice model transfer matrix and the $XXX$ spin chain hamiltonian
share the same eigenvectors \cite{Lieb_1967}.
In fact, this relation was only made clear by Baxter \cite{Baxter_1971} whom showed that the logarithmic derivative
of a two-dimensional vertex model transfer matrix gives rise to an one-dimensional quantum spin chain hamiltonian.
This correspondence between quantum spin chains and classical vertex models is well
established for lattice systems but we also have further connections emerging in the continuum limit.
For instance, it is believed that the massless regimes of vertex models in the continuum are described by the critical
properties of Wess-Zumino-Witten field theories \cite{Knizhnik_1984}. 

As far as vertex models with Domain Wall Boundary Conditions (DWBC) \cite{Korepin_1982} are concerned,
we can not immediately associate an one-dimensional spin chain along the lines of \cite{Baxter_1970}.
Nevertheless, the six-vertex model with domain wall boundaries still exhibits interesting relations with
the theory of classical integrable systems \cite{Foda_2009}, special functions \cite{Lascoux_2007, Rosengren_2011}
and enumerative combinatorics \cite{Kuperberg_1995}. Moreover, in the recent paper \cite{Galleas_proc} we have shown that the
partition function of the six-vertex model with DWBC corresponds to the null eigenvalue 
wave-function of a certain many-body hamiltonian operator.

On the other hand, the $XXZ$ spin chain with anti-periodic boundary conditions can be embedded in the transfer matrix
of a $\mathcal{U}_q [\widehat{\alg{sl}}(2)]$ invariant six-vertex model with a particular non-diagonal boundary twist
along the same lines of \cite{Baxter_1970, deVega_1984}.
This particular spin chain has also been studied in \cite{Baxter_1995, Galleas_2008, Niccoli_2013, WengLi_2013}
and it was the first system tackled through the algebraic-functional method used in \cite{Galleas_2010, Galleas_2011, Galleas_2012} 
for partition functions with domain wall boundaries. This method has been refined in a series of papers and here we intend to report
on a novel connection between the twisted Heisenberg chain and the six-vertex model with DWBC. 

This paper is organized as follows. In \Secref{sec:heisenberg} we briefly describe the transfer
matrix embedding the $XXZ$ chain with anti-periodic boundary conditions and introduce the notation
we shall use throughout this paper. In \Secref{sec:FN} we explore the Yang-Baxter algebra along the lines
of \cite{Galleas_proc} in order to derive a functional equation relating the transfer matrix eigenvalues and the
partition function of the six-vertex model with DWBC. The consequences of this functional equation
is discussed in \Secref{sec:SOL} and, in particular, we show how our results can be simplified
when the anisotropy parameter is a root of unity. Concluding remarks are then discussed in \Secref{sec:conclusion}
and the \Appref{sec:proof} is devoted to the derivation of our main result.

\section{Heisenberg chain and the DWBC partition function}
\label{sec:heisenberg}

In this section we shall give a brief description of the anisotropic Heisenberg chain
with anti-periodic boundary conditions. This model consists of a spin-$\frac{1}{2}$
system and here we shall mostly adopt the conventions of \cite{Galleas_2008}.
The system hamiltonian $\mathcal{H}$ acts on the tensor product space
$\mathbb{V}_{\mathcal{Q}} \cong (\mathbb{C}^2)^{\otimes L}$ and it reads
\[
\label{ham}
\mathcal{H} \coloneqq \sum_{i=1}^L \left( \sigma^x_{i} \sigma^x_{i+1} + \sigma^y_{i} \sigma^y_{i+1} + \cosh{(\gamma)} \sigma^z_{i} \sigma^z_{i+1} \right) \qquad \in \quad \mbox{End}(\mathbb{V}_{\mathcal{Q}}) \; .
\]
In (\ref{ham}) we have employed the notation $\sigma_i = \mathbbm{1}^{\otimes (i-1)} \otimes \sigma \otimes \mathbbm{1}^{\otimes (L-i)}$
where $\sigma \in \{\sigma^x , \sigma^y , \sigma^z \}$ denotes the standard Pauli matrices and $\mathbbm{1}$ stands for the 
identity matrix in $\mbox{End}( \mathbb{C}^2 )$.
As far as boundary terms are concerned, here we consider the following anti-periodic conditions:
\[
\sigma^{x}_{L+1} \coloneqq \sigma^{x}_{1} \qquad \quad \sigma^{y}_{L+1} \coloneqq - \sigma^{y}_{1} \qquad \quad \sigma^{z}_{L+1} \coloneqq -\sigma^{z}_{1}  \; .
\] 

\paragraph{Transfer matrix.} The hamiltonian (\ref{ham}) corresponds to the logarithmic derivative
of a six-vertex model transfer matrix with a particular boundary twist.
Let $\mathbb{V}_{\mathcal{A}} \cong \mathbb{V}_j \cong \mathbb{C}^2$ and consider the element
$G_{\mathcal{A}} \coloneqq \left( \begin{matrix} 0 & 1 \cr 1 & 0 \end{matrix} \right) \in \mbox{End}(\mathbb{V}_{\mathcal{A}})$.
Then consider the operator $\mathcal{R}_{\mathcal{A} j} \colon \mathbb{C} \rightarrow \mbox{End}(\mathbb{V}_{\mathcal{A}} \otimes \mathbb{V}_j)$ 
and define the transfer matrix $T \colon \mathbb{C} \rightarrow \mbox{End}(\mathbb{V}_{\mathcal{Q}})$ as
\[
\label{tmat}
T(\lambda) \coloneqq \mbox{Tr}_{\mathcal{A}} [ G_{\mathcal{A}} \mathop{\overrightarrow\prod}\limits_{1 \le j \le L } \mathcal{R}_{\mathcal{A} j} (\lambda - \mu_j) ]  \; , 
\] 
where $\lambda , \mu_j \in \mathbb{C}$. The trace in (\ref{tmat}) is taken over the space $\mathbb{V}_{\mathcal{A}}$ while the matrix 
$\mathcal{R} \in \mbox{End}(\mathbb{V}_1 \otimes \mathbb{V}_2)$ reads
\<
\label{rmat}
\mathcal{R}(\lambda) \coloneqq \left( \begin{matrix} a(\lambda) & 0 & 0 & 0 \cr 
0 & b(\lambda) & c(\lambda) & 0  \cr 0 & c(\lambda) & b(\lambda) & 0 \cr 0 & 0 & 0 & a(\lambda) \end{matrix}  \right) \; .
\>  
The non-null entries of (\ref{rmat}) corresponds to the functions: $a(\lambda) \coloneqq \sinh{(\lambda + \gamma)}$, 
$b(\lambda) \coloneqq \sinh{(\lambda)}$ and $c(\lambda) \coloneqq \sinh{(\gamma)}$ for parameters $\lambda , \gamma \in \mathbb{C}$.
In this way the hamiltonian (\ref{ham}) is obtained from the relation $\mathcal{H} \sim \left. \frac{d}{d \lambda} \ln{T(\lambda)} \right|_{\stackrel{\lambda = 0}{\mu_j = 0}}$
and it is also worth mentioning that the $\mathcal{R}$-matrix (\ref{rmat}) satisfies the standard Yang-Baxter equation \cite{Baxter_book}.
In addition to that the matrix $G_{\mathcal{A}}$ fulfills the property $[ \mathcal{R}, G_{\mathcal{A}} \otimes G_{\mathcal{A}}  ] = 0$
ensuring that the transfer matrix (\ref{tmat}) forms a commutative family.

\paragraph{Monodromy matrix.} Let $\mathcal{T}_{\mathcal{A}} \colon \mathbb{C} \rightarrow \mbox{End}( \mathbb{V}_{\mathcal{A}} \otimes \mathbb{V}_{\mathcal{Q}} )$ be 
the following operator
\[
\label{mono}
\mathcal{T}_{\mathcal{A}} (\lambda) \coloneqq \mathop{\overrightarrow\prod}\limits_{1 \le j \le L } \mathcal{R}_{\mathcal{A} j} (\lambda - \mu_j) \; ,
\] 
which shall be referred to as monodromy matrix. As the $\mathcal{R}$-matrix (\ref{rmat}) satisfies the Yang-Baxter equation, one can show
that the monodromy matrix (\ref{mono}) fulfills the following quadratic identity 
\[
\label{yba}
\mathcal{R}_{12} (\lambda_1 - \lambda_2) \mathcal{T}_1 (\lambda_1) \mathcal{T}_2 (\lambda_2) = \mathcal{T}_2 (\lambda_2) \mathcal{T}_1 (\lambda_1) \mathcal{R}_{12} (\lambda_1 - \lambda_2) \; .
\]
The relation (\ref{yba}) is usually referred to as Yang-Baxter algebra and since $\mathbb{V}_{\mathcal{A}} \cong \mathbb{C}^2$,
the monodromy matrix $\mathcal{T}_{\mathcal{A}}$ can be recasted as 
\[
\label{abcd}
\mathcal{T}_{\mathcal{A}} (\lambda) \eqqcolon \left( \begin{matrix} A(\lambda) & B(\lambda) \cr C(\lambda) & D(\lambda) \end{matrix} \right) \; ,
\]
with operators $A, B , C, D \in \mbox{End} ( \mathbb{V}_{\mathcal{Q}} )$. In this way we find that the transfer matrix (\ref{tmat}) simply
reads $T(\lambda) = B(\lambda) + C(\lambda)$.

\paragraph{Domain wall boundaries.} The $\mathcal{R}$-matrix (\ref{rmat}) encodes the statistical weights of
a six-vertex model as discussed in \cite{Baxter_book}. However, one still needs to define appropriate
boundary conditions in order to having a non-trivial partition function for the model. 
The case of DWBC for the six-vertex model was introduced on a square lattice of dimensions $L \times L$
by Korepin in \cite{Korepin_1982}. More precisely, in \cite{Korepin_1982} the author derives a recurrence relation
for the partition function of the model which was subsequently solved by Izergin \cite{Izergin_1987}.
The aforementioned partition function then reads
\[
\label{pf}
Z(\lambda_1, \dots , \lambda_L) = \bra{\bar{0}} \mathop{\overrightarrow\prod}\limits_{1 \le j \le L } B(\lambda_j) \ket{0} \; ,
\] 
with vectors $\ket{0}$ and $\ket{\bar{0}}$ defined as
\[
\label{states}
\ket{0} \coloneqq \left( \begin{matrix} 1 \cr 0 \end{matrix} \right)^{\otimes L} \qquad \mbox{and} \qquad \ket{\bar{0}} \coloneqq \left( \begin{matrix} 0 \cr 1 \end{matrix} \right)^{\otimes L} \; .
\]

\paragraph{Highest/lowest weight vectors.} The vectors $\ket{0}$ and $\ket{\bar{0}}$ defined in (\ref{states}) are respectively the
$\alg{sl}(2)$ highest and lowest weight vectors. The action of the entries of the monodromy matrix (\ref{abcd}) on those vectors
are given as follows:
\begin{align}
\label{action}
A(\lambda) \ket{0} &= \prod_{j=1}^{L} a(\lambda - \mu_j) \ket{0} &  D(\lambda) \ket{0} &= \prod_{j=1}^{L} b(\lambda - \mu_j) \ket{0} \nonumber \\
A(\lambda) \ket{\bar{0}} &= \prod_{j=1}^{L} b(\lambda - \mu_j) \ket{\bar{0}} & D(\lambda) \ket{\bar{0}} &= \prod_{j=1}^{L} a(\lambda - \mu_j) \ket{\bar{0}}  \nonumber \\
B(\lambda) \ket{\bar{0}} &= 0 & C(\lambda) \ket{0} &= 0 \; . \nonumber \\
\end{align}

\section{Functional equations}
\label{sec:FN}

The spectrum of the anti-periodic Heisenberg chain hamiltonian (\ref{ham}) can be obtained directly from the
spectrum of the transfer matrix (\ref{tmat}). This is due to the fact that the hamiltonian $\mathcal{H}$
is given by the logarithmic derivative of the transfer matrix $T$, in addition to the property $[ T(\lambda) , T(\mu)] = 0$
ensured by the relation (\ref{yba}). Thus one can shift the attention to the spectral problem associated with the transfer matrix $T$.
In its turn, this problem can be tackled through the method introduced in \cite{Galleas_2008} and subsequently extended 
in \cite{Galleas_2010, Galleas_2011, Galleas_2012, Galleas_SCP}. For that it is convenient to introduce some extra
definitions and conventions.

\begin{mydef} Let $B(\lambda_i) \in \mathbb{V}_{\mathcal{Q}}$ be an off-diagonal element of the Yang-Baxter
algebra as defined in (\ref{yba}). We then introduce the following notation for the product of $n$ 
generators $B$,
\[
\label{phi}
\left[ \lambda_1, \dots , \lambda_n \right]  \coloneqq \mathop{\overrightarrow\prod}\limits_{1 \le i \le n } B(\lambda_i) \; .
\]
\end{mydef}
 
\begin{remark} The property $B(\lambda) B(\mu) = B(\mu) B(\lambda)$, encoded in the relation (\ref{yba}),
ensures that $\left[ \lambda_1, \dots , \lambda_n \right]$ is symmetric under the permutation of variables
$\lambda_i \leftrightarrow \lambda_j$. Thus, when it is convenient, we shall also employ the simplified notation 
$\left[ X^{1,n} \right] \coloneqq \left[ \lambda_1, \dots , \lambda_n \right]$ where $X^{i,j} \coloneqq \{ \lambda_k \; | \; i \leq k \leq j \}$.
\end{remark}

\medskip

Next we recall that $T(\lambda) = B(\lambda) + C(\lambda)$ and consider the action of $T(\lambda_0)$ over the 
element $\left[ X^{1,n} \right]$. For that the most lengthy computation is the term $C(\lambda_0) \left[ X^{1,n} \right]$
which can be evaluated with the help of the commutation relations contained in (\ref{yba}). 
This computation has been performed in \cite{Korepin_1982, Galleas_2010} and we shall restrict ourselves to presenting
only the final results. In this way we are left with the following expression,
\<
\label{tphi}
T(\lambda_0) \left[ X^{1,n} \right] &=& \left[ X^{0,n} \right] + \sum_{1 \leq i \leq n} \left[ X^{1,n}_i \right] ( \Gamma_{0,i}^{i} A(\lambda_0) D(\lambda_i) + \Gamma_{i,0}^{i} A(\lambda_i) D(\lambda_0) ) \nonumber \\
&&+ \sum_{1 \leq i < j \leq n} \left[ X^{0,n}_{i,j} \right] ( \Omega_{i,j} A(\lambda_i) D(\lambda_j) + \Omega_{j,i} A(\lambda_j) D(\lambda_i) ) \; ,
\>
where $X^{1,n}_i \coloneqq X^{1,n} \backslash \{ \lambda_i \}$ and $X^{0,n}_{i,j} \coloneqq X^{0,n} \backslash \{ \lambda_i , \lambda_j \}$.
The coefficients $\Gamma_{j,k}^{i}$ and $\Omega_{i,j}$ in (\ref{tphi}) explicitly read
\<
\label{mn}
\Gamma_{j,k}^{i} &\coloneqq& \frac{c(\lambda_k - \lambda_j)}{b(\lambda_k - \lambda_j)} \prod_{ \lambda \in X^{1,n}_i   } \frac{a(\lambda_k - \lambda)}{b(\lambda_k - \lambda)} \frac{a(\lambda - \lambda_j)}{b(\lambda - \lambda_j)} \nonumber \\
\Omega_{i,j} &\coloneqq& \frac{c(\lambda_j - \lambda_0)}{a(\lambda_j - \lambda_0)} \frac{c(\lambda_0 - \lambda_i)}{a(\lambda_0 - \lambda_i)} \frac{a(\lambda_j - \lambda_i)}{b(\lambda_j - \lambda_i)}  \prod_{\lambda \in X^{0,n}_{i,j}} \frac{a(\lambda_j - \lambda)}{b(\lambda_j - \lambda)} \frac{a(\lambda - \lambda_i)}{b(\lambda - \lambda_i)} \; .
\>

The interpretation of the Yang-Baxter algebra as a source of functional equations \cite{Galleas_proc} can now be immediately
invoked. In this way one can recognize (\ref{tphi}) as a Yang-Baxter relation of order $n+1$ and, in addition to extra properties,
this relation will allow us to derive a functional equation describing the spectrum of the
transfer matrix $T$. For that it is also convenient to introduce the following definition.

\medskip
\begin{mydef}
Let $n \in \mathbb{Z}_{>0}$ be a discrete index and $\mathcal{M}(\lambda) \coloneqq \{ A, B, C, D \}(\lambda)$. 
Also, let $\mathcal{W}_n \coloneqq \mathcal{M}(\lambda_1) \times \mathcal{M}(\lambda_2) \times \dots \mathcal{M}(\lambda_n)$ 
with $n$-tuples $(\xi_1, \xi_2 , \dots , \xi_n)$ being understood as $\mathop{\overrightarrow\prod}\limits_{1 \le i \le n } \xi_i$.
Then consider the function space $\mathbb{C}[\lambda_1^{\pm 1} , \dots , \lambda_n^{\pm 1} ]$ of regular complex-valued functions
on $(\lambda_1 , \dots , \lambda_n ) \in \mathbb{C}^n$ and define $\tilde{\mathcal{W}}_n \coloneqq \mathbb{C}[\lambda_1^{\pm 1} , \dots , \lambda_n^{\pm 1} ] \otimes \gen{span}_{\mathbb{C}}(\mathcal{W}_n)$.
The map $\pi_n$ is then introduced as the following $n$-additive continuous map
\[
\label{mapi}
\pi_n \colon \tilde{\mathcal{W}}_n \rightarrow \mathbb{C}[\lambda_1^{\pm 1} , \lambda_2^{\pm 1} , \dots , \lambda_n^{\pm 1} ] \; .
\]
In other words, the map $\pi_n$ associates a multivariate complex function to any product of $n$ generators of the Yang-Baxter algebra.
\end{mydef}

\medskip
The next step within this approach consists in finding a suitable realization of the map $\pi_n$ which is able to convert (\ref{tphi})
into appropriate functional equations. We shall proceed along the lines of \cite{Galleas_proc} and adopt a particular scalar product
as realization of $\pi_n$.

\paragraph{Realization of $\pi_n$.} Let $\ket{\gen{\Psi}} \in \mbox{span}(\mathbb{V}_{\mathcal{Q}})$ be an eigenvector
of the transfer matrix (\ref{tmat}) with eigenvalue $\Lambda(\lambda)$. More precisely, we have the action 
$T(\lambda) \ket{\gen{\Psi}} = \Lambda(\lambda) \ket{\gen{\Psi}}$. Then, taking into account the definition (\ref{states}),
we define the map $\pi$ as 
\[
\label{pir}
\pi_{n+1} (\mathcal{A}) \coloneqq \bra{\gen{\Psi}} \mathcal{A} \ket{0} \qquad \qquad \forall \; \mathcal{A} \; \in \; \tilde{\mathcal{W}}_{n+1} \; . 
\]

\bigskip 
At this stage we have gathered all the ingredients required to convert the Yang-Baxter algebra relation (\ref{tphi})
into a functional equation characterizing the eigenvalues $\Lambda$. 
For that we only need to apply the map (\ref{mapi}) to the relation (\ref{tphi}), taking into account the realization (\ref{pir}). 
This procedure can be effectively carried out by  noticing that the LHS of (\ref{tphi}) obeys the following reduction
property $\pi_{n+1} \rightarrow \pi_{n}$,
\[
\label{piLHS}
\pi_{n+1} ( T(\lambda_0) \left[ X^{1,n} \right] ) = \Lambda(\lambda_0) \pi_{n} ( \left[ X^{1,n} \right] ) \; .
\]
On the other hand, by applying the map (\ref{mapi}) to (\ref{tphi}), the terms in the RHS are of the following form:
$\pi_{n+1}( \left[ X^{0,n} \right] )$, $\pi_{n+1}( \left[ X_i^{1,n} \right] A(z_1) D(z_2) )$ for $(z_1, z_2) \in \gen{Sym}( \{ \lambda_0 , \lambda_i \})$
and $\pi_{n+1}( \left[ X_{i,j}^{0,n} \right] A(z_1) D(z_2) )$ for $(z_1, z_2) \in \gen{Sym}( \{ \lambda_i , \lambda_j \})$.
The term $\pi_{n+1}( \left[ X^{0,n} \right] )$ can not be significantly simplified, but for 
$\gen{Y} \in \{ X_i^{1,n} , X_{i,j}^{0,n} \}$ and using (\ref{action}), we find that
\[
\label{piRHS}
\pi_{n+1}( \left[ \gen{Y} \right] A(z_1) D(z_2) ) = \prod_{k=1}^L a(z_1 - \mu_k) b(z_2 - \mu_k) \; \pi_{n-1}( \left[ \gen{Y} \right] ) \; .
\]

Our results so far can be written in a more convenient form with the help of the notation 
$\pi_n (\left[X \right]) \coloneqq \mathcal{F}_n (X )$ for a given set $X = \{ \lambda_k \}$ of cardinality $n$.
In this way, this procedure yields the following set of functional equations,
\<
\label{FL}
\Lambda(\lambda_0) \mathcal{F}_n ( X^{1,n} ) &=& \mathcal{F}_{n+1} ( X^{0,n} ) + \sum_{1 \leq i \leq n} M_i^{(n)} \mathcal{F}_{n-1} ( X^{1,n}_i ) \nonumber \\
&& + \sum_{1 \leq i < j \leq n} N_{j,i}^{(n)} \mathcal{F}_{n-1} ( X^{0,n}_{i,j} ) \; ,
\>
with coefficients $M_i^{(n)} = M_i^{(n)}( \vec{X}^{0,n} )$ and $N_{j,i}^{(n)} = N_{j,i}^{(n)}( \vec{X}^{0,n} )$ given by
\<
\label{coeff}
M_i^{(n)} &\coloneqq& \Gamma_{0,i}^i \prod_{k=1}^{L} a(\lambda_0 - \mu_k) b(\lambda_i - \mu_k) + \Gamma_{i,0}^i \prod_{k=1}^{L} a(\lambda_i - \mu_k) b(\lambda_0 - \mu_k) \nonumber \\
N_{j,i}^{(n)} &\coloneqq& \Omega_{i,j} \prod_{k=1}^{L} a(\lambda_i - \mu_k) b(\lambda_j - \mu_k) + \Omega_{j,i} \prod_{k=1}^{L} a(\lambda_j - \mu_k) b(\lambda_i - \mu_k) \; .
\>
The symbol $\vec{X}^{0,n}$ has been introduced in order to emphasize that the functions $M_i^{(n)}$ and $N_{j,i}^{(n)}$
are not invariant under the permutation of all variables. Its precise definition is given as follows.
\begin{mydef}
Let $i,j \in \mathbb{Z}$ such that $i<j$. Then $\vec{X}^{i,j}$ stands for the vector
\[
\vec{X}^{i,j} \coloneqq (\lambda_i , \lambda_{i+1} , \lambda_{i+2} , \dots , \lambda_{j} ) \; .
\]
For latter convenience we shall also define the symbols $\vec{X}_k^{i,j}$ and $\vec{X}_{k,l}^{i,j}$
for $i \leq k,l \leq j$ such that $k<l$. They are respectively defined as
\<
\vec{X}_k^{i,j} &\coloneqq& (\lambda_i , \lambda_{i+1} , \dots,  \lambda_{k-1} , \lambda_{k+1} , \dots , \lambda_{j} ) \nonumber \\
\vec{X}_{k,l}^{i,j} &\coloneqq& (\lambda_i , \lambda_{i+1} , \dots,  \lambda_{k-1} , \lambda_{k+1} , \dots , \lambda_{l-1} , \lambda_{l+1} , \dots , \lambda_{j} ) \; .
\>
\end{mydef}

Some remarks are required at this stage. For instance, the functional equation (\ref{FL}) consists
of an extension of the equation obtained in \cite{Galleas_2008} and it has also been recently described
in \cite{WengLi_2013}. Moreover, at algebraic level there is no upper limit for the discrete index $n$ in 
Eq. (\ref{FL}). However, the $\alg{sl}(2)$ highest weight representation theory imposes
an upper bound for that index.

\section{The eigenvalues $\Lambda$ and the partition function $Z$}
\label{sec:SOL}

In the previous section we have derived a functional equation involving the eigenvalues
$\Lambda$ of the transfer matrix (\ref{tmat}) and a certain set of functions 
$\mathcal{F}_n$. Here we intend to show that the functions $\mathcal{F}_n$ can be eliminated
from the system of Eqs. (\ref{FL}) yielding a single equation for the eigenvalues $\Lambda$.
The functional equation for $\Lambda$ obtained in this way will depend explicitly on the partition
function $Z$ of the six-vertex model with DWBC. 

\paragraph{Highest weight and domain walls.} The highest weight representation theory of
the $\alg{sl}(2)$ algebra gives an upper bound for the number of operators $B$ entering the product
(\ref{phi}) as discussed in \cite{Korepin_1982, Galleas_2010}. 
This feature is manifested in the following property,
\[
\label{high}
[ X^{1,L} ]  \ket{0} = Z(X^{1,L}) \ket{\bar{0}} \; .
\]
Thus the relations (\ref{high}) and (\ref{pir}) imply that $\mathcal{F}_{L} ( X^{1,L} ) = Z(X^{1,L}) \bar{\mathcal{F}}_0$ 
where $\bar{\mathcal{F}}_0 = \left< \gen{\Psi} \right| \left. \bar{0} \right>$.
Moreover, the function $\mathcal{F}_n$ vanishes for $n > L$ due to (\ref{action}), (\ref{pir}) and (\ref{high}).
It is worth remarking here that Eq. (\ref{FL}) assumes that $\mathcal{F}_n$ vanishes for $n < 0 $.

\bigskip
In order to illustrate our procedure, let us firstly have a closer look at Eq. (\ref{FL}) for the case $L=2$.
In that case we can set $n=0,1,2,3$ and by doing so we are left with the following set of equations:
\<
\label{L2}
\Lambda(\lambda_0) \mathcal{F}_0 &=& \mathcal{F}_1 (X^{0,0}) \nonumber \\
\Lambda(\lambda_0) \mathcal{F}_1 (X^{1,1}) &=& Z( X^{0,1} ) \bar{\mathcal{F}}_0 + M_1^{(1)} \mathcal{F}_0 \nonumber \\
\Lambda(\lambda_0) Z( X^{1,2} ) \bar{\mathcal{F}}_0 &=& \sum_{1 \leq i \leq 2} M_i^{(2)} \mathcal{F}_1 (X^{1,2}_i) + N_{2,1}^{(2)} \mathcal{F}_1 (X^{0,2}_{1,2}) \nonumber \\
0 &=& \sum_{1 \leq i \leq 3} M_i^{(3)} Z( X^{1,3}_i ) + \sum_{1 \leq i < j \leq 3} N_{j,i}^{(3)} Z( X^{0,3}_{i,j} ) \; .
\>
The last equation in (\ref{L2}) involves solely the partition function $Z$ and it had been previously described in 
\cite{Galleas_2010}. This single equation is fully able to determine the function $Z$, up to an overall constant factor,
while the remaining equations relate the eigenvalue $\Lambda$ and the auxiliary function $\mathcal{F}_1$.
We shall then use the first equation of (\ref{L2}) to eliminate $\mathcal{F}_1$ from the second and third
equations. By doing so we are left with the relations,
\<
\label{LL2}
\Lambda(\lambda_0) \Lambda(\lambda_1) &=& Z(X^{0,1}) k_0 + M_1^{(1)} \nonumber \\
\Lambda(\lambda_0) \left[ Z(X^{1,2}) k_0 - N_{2,1}^{(2)} \right]  &=& M_1^{(2)} \Lambda(\lambda_2) + M_2^{(2)} \Lambda(\lambda_1) \; ,
\>
where $k_0 = \bar{\mathcal{F}}_0 /\mathcal{F}_0$. In what follows we shall assume that the partition function $Z$ is already determined 
and the only unknown factors in (\ref{LL2}) are the coefficients $k_0$ and the function $\Lambda$.
Both equations in (\ref{LL2}) are able to determine $k_0$ and the eigenvalues $\Lambda$, however, the first equation is non-linear
while the second is linear. In fact, the direct inspection of (\ref{LL2}) reveals that 
$k_0 = \Lambda(\mu_1) \Lambda(\mu_2) \left[ c^2 a(\mu_1 - \mu_2) a(\mu_2 - \mu_1) \right]^{-1}$.

For the case $L=3$ we can set $n=0,1,2,3,4$ in (\ref{FL}). Each choice produces an independent equation and the whole set 
consists of the following equations,
\<
\label{L3}
\Lambda(\lambda_0) \mathcal{F}_0 &=& \mathcal{F}_1 (X^{0,0}) \nonumber \\
\Lambda(\lambda_0) \mathcal{F}_1 (X^{1,1}) &=& \mathcal{F}_2 (X^{0,1}) + M_1^{(1)} \mathcal{F}_0 \nonumber \\
\Lambda(\lambda_0) \mathcal{F}_2 (X^{1,2}) &=& Z(X^{0,2}) \bar{\mathcal{F}}_0 + \sum_{1 \leq i \leq 2} M_i^{(2)} \mathcal{F}_1 (X^{1,2}_i) + N_{2,1}^{(2)} \mathcal{F}_1 (X^{0,2}_{1,2}) \nonumber \\
\Lambda(\lambda_0) Z(X^{1,3}) \bar{\mathcal{F}}_0 &=&  \sum_{1 \leq i \leq 3} M_i^{(3)} \mathcal{F}_2 (X^{1,3}_i) + \sum_{1 \leq i < j  \leq 3} N_{j,i}^{(3)} \mathcal{F}_2 (X^{0,3}_{i,j}) \nonumber \\
0 &=& \sum_{1 \leq i \leq 4} M_i^{(4)} Z( X^{1,4}_i ) + \sum_{1 \leq i < j \leq 4} N_{j,i}^{(4)} Z( X^{0,4}_{i,j} ) \; .
\>
Similarly to the case $L=2$, we can eliminate the functions $\mathcal{F}_i$ from the system of equations (\ref{L3})
in favor of the function $\Lambda$. By carrying out this procedure recursively, we find that the third and fourth equations in (\ref{L3})
can be rewritten as 
\<
\label{LL3}
\Lambda(\lambda_0) \Lambda(\lambda_1) \Lambda(\lambda_2) &=& Z(X^{0,2}) k_0 + \Lambda(\lambda_0) [ M_1^{(1)} (\vec{X}^{1,2}) + N_{2,1}^{(2)} (\vec{X}^{0,2}) ] \nonumber \\
&&+ \; \Lambda(\lambda_1) M_2^{(2)} ( \vec{X}^{0,2} ) + \Lambda(\lambda_2) M_1^{(2)} ( \vec{X}^{0,2} ) \nonumber \\
\Lambda(\lambda_0) Z(X^{1,3}) k_0 &=& \sum_{1 \leq i \leq 3} M_i^{(3)}(\vec{X}^{0,3}) \prod_{\stackrel{k=1}{k \neq i}}^3 \Lambda(\lambda_k) + \sum_{1 \leq i < j \leq 3} N_{j,i}^{(3)}(\vec{X}^{0,3}) \Lambda(\lambda_0) \prod_{\stackrel{k=1}{k \neq i,j}}^3 \Lambda(\lambda_k) \nonumber \\
&& - \; \sum_{1 \leq i \leq 3} M_i^{(3)} (\vec{X}^{0,3}) M_1^{(1)} (\vec{X}_i^{1,3}) - \sum_{1 \leq i < j \leq 3} N_{j,i}^{(3)} (\vec{X}^{0,3}) M_1^{(1)} (\vec{X}_{i,j}^{0,3}) \; . \nonumber \\
\>
In contrast to the case $L=2$, none of the equations in (\ref{LL3}) is linear and this is the general behavior for arbitrary $L$.
At this stage it is also worth stressing that for both cases, namely $L=2$ and $L=3$, we have explicitly written two sets of equations
relating the eigenvalues $\Lambda$ and the partition function $Z$. Each set is formed by two equations, i.e. (\ref{LL2}) and (\ref{LL3}), however,
there is a dramatic difference between the first and the second equations of each set.
For instance, while the first equation runs over the set of variables $\{ \lambda_k \; | \; 0 \leq k \leq L-1 \}$,
the second equation is defined over the set $\{ \lambda_k \; | \; 0 \leq k \leq L \}$.
The partition function $Z$ can also be described through functional equations and a similar feature had previously
appeared for that problem. If we compare the functional equations for $Z$ derived in \cite{Galleas_2010} and \cite{Galleas_2012}
we can readily see they are defined over a different number of variables.
In what follows we shall focus on the functional equation relating $\Lambda$ and $Z$ generalizing the first equation
of (\ref{LL2}) and (\ref{LL3}) for arbitrary values of $L$. The following definitions will be helpful.

\begin{mydef}
The symbol $[x]$ is defined as 
\<
\left[x \right] \coloneqq \begin{cases}
x \quad \quad \quad \mbox{for} \;\; x \in 2 \mathbb{Z}_{>0} \\
x-1 \quad \; \mbox{for} \;\; x \in (2 \mathbb{Z}_{>0} +1) \quad . \end{cases}
\>
\end{mydef}
\medskip
\begin{mydef}
Let $f(\lambda) \in \mathbb{C}[\lambda]$ and consider the product operator $\widehat{\prod}_{\lambda}^{i_1, \dots , i_{2m}} \colon \mathbb{C}[\lambda] \rightarrow \mathbb{C}[ \lambda_{v_1} ] \times \dots \times \mathbb{C} [\lambda_{v_{L-2m}}]$
for $\lambda_{v_j} \in X^{0,L-1}_{i_1 , \dots , i_{2m}}$ such that $\lambda_{v_j} \neq \lambda_{v_k}$ if $j \neq k$.
The relation $X^{l,m}_{i_1 , \dots, i_{n+1}} = X^{l,m}_{i_1 , \dots, i_{n}} \backslash \{ \lambda_{i_{n+1}} \}$ generalizes
recursively our previous definition and the product operator $\widehat{\prod}_{\lambda}^{i_1, \dots , i_{2m}}$ is defined as
\<
\left( \widehat{\prod}_{\lambda}^{i_1, \dots , i_{2m}} f  \right) (\lambda) &\coloneqq& \prod_{\lambda \in X^{0,L-1}_{i_1 , \dots, i_{2m}}} f(\lambda) = \prod^{L-1}_{\stackrel{k=0}{k \neq i_1, \dots , i_{2m}}} f(\lambda_k) \; .
\>
\end{mydef}

\begin{theorem} \label{ZV}
The partition function $Z$ can be written in terms of the eigenvalues $\Lambda$ according to the formula
\<
\label{Lgen}
Z(X^{0,L-1}) k_0 = \left\{ \sum_{m=0}^{[L]/2} \; \sum_{0 \leq i_1 < \dots < i_{2m} \leq L-1} V^{(2m)}_{i_{2m} , \dots, i_1} \; \widehat{\prod}_{\lambda}^{i_1, \dots , i_{2m}} \right\}  \Lambda(\lambda)  \; ,
\>
where 
\<
\label{VV}
V^{(2m)}_{i_{2m} , \dots , i_1} &\coloneqq&  \sum_{J} \prod_{l=1}^{m} \prod_{n=1}^{L} a(\lambda_{j_l} - \mu_n ) \prod_{\lambda \in X^{0,L-1}_{i_1 , \dots , i_{2m}}} \frac{a(\lambda - \lambda_{j_l})}{b(\lambda - \lambda_{j_l})} \nonumber \\
&& \qquad \qquad \times \sum_{K(J)} \prod_{l=1}^{m} \prod_{n=1}^{L} b(\lambda_{k_l} - \mu_n ) \frac{c(\lambda_{j_l} - \lambda_{k_l})}{b(\lambda_{j_l} - \lambda_{k_l})} \prod_{\lambda \in X^{0,L-1}_{i_1 , \dots , i_{2m}}} \frac{a(\lambda_{k_l} - \lambda)}{b(\lambda_{k_l} - \lambda)}  \nonumber \\
&& \qquad \qquad \qquad \times \; \; \prod_{1 \leq r < s \leq m} \frac{a(\lambda_{k_r} - \lambda_{k_s})}{b(\lambda_{k_r} - \lambda_{k_s})} \frac{a(\lambda_{k_r} - \lambda_{j_s})}{b(\lambda_{k_r} - \lambda_{j_s})} \frac{a(\lambda_{k_s} - \lambda_{j_r} + \gamma)}{b(\lambda_{k_s} - \lambda_{j_r})} \; ,  \nonumber \\
\>
with summation symbols defined as $\displaystyle \sum_{J} \coloneqq \sum_{\stackrel{j_1 , \dots , j_m \in \mathcal{I}_{2m}}{j_1 < j_2 < \dots < j_m}}$
and $\displaystyle \sum_{K(J)} \coloneqq \sum_{\stackrel{k_1 , \dots , k_m \in \mathcal{J}_{2m}}{k_{\alpha} \neq k_{\beta}}}$. 
The symbols $\mathcal{I}_{2m}$ and $\mathcal{J}_{2m}$ stand respectively for the sets $\mathcal{I}_{2m} \coloneqq \{ i_1, \dots , i_{2m} \}$ and 
$\mathcal{J}_{2m} \coloneqq \mathcal{I}_{2m} \backslash \{ j_1, \dots , j_{m} \}$. For clarity's sake, we stress here that $V^{(0)} \coloneqq 1$.
\end{theorem}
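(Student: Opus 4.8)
The plan is to treat the system \eqref{FL} as a finite linear recursion in the index $n$ and to solve it in the forward direction, exactly as was carried out by hand for $L=2,3$ in \eqref{L2}, \eqref{LL2}, \eqref{L3} and \eqref{LL3}. The boundary data are $\mathcal{F}_n=0$ for $n<0$, $\mathcal{F}_0$ a constant, $\mathcal{F}_1(\{\lambda_0\})=\Lambda(\lambda_0)\mathcal{F}_0$ (the $n=0$ instance of \eqref{FL}), and the highest-weight closure $\mathcal{F}_L(Y)=Z(Y)\bar{\mathcal{F}}_0$ inherited from \eqref{high} and \eqref{pir}. Rewriting \eqref{FL} as $\mathcal{F}_{n+1}(X^{0,n})=\Lambda(\lambda_0)\mathcal{F}_n(X^{1,n})-\sum_i M_i^{(n)}\mathcal{F}_{n-1}(X^{1,n}_i)-\sum_{i<j}N_{j,i}^{(n)}\mathcal{F}_{n-1}(X^{0,n}_{i,j})$ and exploiting the permutation symmetry of the $\mathcal{F}_p$ (the Remark following \eqref{phi}) to relabel the pivot variable, one can express every $\mathcal{F}_p$ entirely through $\Lambda$ and the structure coefficients. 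The target identity \eqref{Lgen} is then nothing but the $n=L-1$ instance of \eqref{FL} after substituting $\mathcal{F}_L(X^{0,L-1})=Z(X^{0,L-1})\bar{\mathcal{F}}_0$, dividing by $\mathcal{F}_0$, and setting $k_0=\bar{\mathcal{F}}_0/\mathcal{F}_0$.

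The core of the argument is a closed form for the auxiliary functions, which I would establish by induction on $p$: for any set $Y$ of $p$ spectral variables,
\[
\frac{\mathcal{F}_p(Y)}{\mathcal{F}_0}=\sum_{m=0}^{\lfloor p/2\rfloor}\ \sum_{0\le i_1<\dots<i_{2m}} V^{(2m)}_{i_{2m},\dots,i_1}(Y)\prod_{\lambda\in Y\setminus\{\lambda_{i_1},\dots,\lambda_{i_{2m}}\}}\Lambda(\lambda),
\]
where $V^{(2m)}_{i_{2m},\dots,i_1}(Y)$ is the coefficient \eqref{VV} with the surviving set $X^{0,L-1}_{i_1,\dots,i_{2m}}$ replaced by the corresponding subset of $Y$, the inhomogeneity products $\prod_{n=1}^L$ remaining over all $L$ sites since they originate from the global action \eqref{action}. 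The base cases $p=0,1$ are immediate. For the inductive step I would feed this ansatz into the rearranged \eqref{FL}: the leading term $\Lambda(\lambda_0)\mathcal{F}_p$ merely prepends one more $\Lambda$-factor; each $M_i^{(p)}$ term absorbs the pivot $\lambda_0$ together with $\lambda_i$ into a new pair, the two orderings $\Gamma_{0,i}^i$ and $\Gamma_{i,0}^i$ supplying precisely the two assignments of the $(j,k)$ roles in the $\sum_J$ and $\sum_{K(J)}$ sums; while each $N_{j,i}^{(p)}$ term creates a pair among the non-pivot variables, leaving $\lambda_0$ to be processed at the next level. One then verifies that the running products $\prod_{\lambda\in X^{1,n}_i}$ in $\Gamma$ and $\prod_{\lambda\in X^{0,n}_{i,j}}$ in $\Omega$ telescope across successive steps into the surviving-set product of \eqref{VV}, and that the interactions between pairs created at different stages accumulate into the factor $\prod_{1\le r<s\le m}$.

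The main obstacle is exactly this verification: proving the combinatorial and trigonometric identity that the nested recursion reproduces $V^{(2m)}$ term by term, signs included. The sign bookkeeping is essential, since $c(\lambda)=\sinh\gamma$ is even and $b(\lambda)=\sinh\lambda$ is odd, so the minus signs in the rearranged \eqref{FL} are converted into the argument reversals already visible in the relation $V^{(2)}_{1,0}=-M_1^{(1)}$ implicit in \eqref{LL2}; likewise the $a$-weights, matched through $a(\lambda)=\sinh(\lambda+\gamma)$, must reproduce the shifted entry $a(\lambda_{k_s}-\lambda_{j_r}+\gamma)$ in the pair product of \eqref{VV}. Once the inductive step is secured, specialising to $p=L$ and $Y=X^{0,L-1}$ and invoking $\mathcal{F}_L(X^{0,L-1})/\mathcal{F}_0=Z(X^{0,L-1})k_0$ yields \eqref{Lgen}; consistency with the explicitly derived cases \eqref{LL2} and \eqref{LL3} furnishes checks at $L=2,3$.
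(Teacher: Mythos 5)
Your proposal is correct and follows essentially the same route as the paper: forward elimination of the auxiliary functions $\mathcal{F}_n$ from the system \eqref{FL}, with the target identity \eqref{Lgen} arising as the $n=L-1$ member after the highest-weight closure $\mathcal{F}_L(X^{0,L-1})=Z(X^{0,L-1})\bar{\mathcal{F}}_0$, and the coefficients matched against \eqref{VV} as in the explicit $L=2,3,4$ computations. The obstacle you flag --- verifying in general that the nested recursion reproduces $V^{(2m)}_{i_{2m},\dots,i_1}$ term by term --- is exactly the step the paper also leaves as a ``cumbersome computation'' checked only for small $L$, so your blind attempt neither falls short of nor goes beyond the published argument.
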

\begin{proof}
The proof follows from the extension of the derivation presented for the cases $L=2$ and $L=3$. These cases
are respectively covered by formulae (\ref{LL2}) and (\ref{LL3}). The derivation of formula (\ref{VV}) for arbitrary
$L$ is discussed in \appref{sec:proof}. 
\end{proof}

\medskip
\begin{example}
The RHS of (\ref{Lgen}) for $L=2$ reads
\[
\label{ZB2}
\Lambda(\lambda_0) \Lambda(\lambda_1) + V^{(2)}_{1, 0} \; ,
\]
while for $L=3$ we have
\[
\label{ZB3}
\Lambda(\lambda_0) \Lambda(\lambda_1) \Lambda(\lambda_2) + \sum_{0 \leq i_1 < i_2 \leq 2} V_{i_2 , i_1}^{(2)}  \prod_{\stackrel{k=0}{k \neq i_1 , i_2}}^{2} \Lambda(\lambda_k) \; .
\]
\end{example}

The structure of the relation (\ref{Lgen}) is quite appealing and some remarks are in order. For instance, the relation (\ref{Lgen})
converts the problem of evaluating the partition function $Z$ into the diagonalisation of the transfer matrix of the six-vertex
model with a non-diagonal boundary twist. This situation is analogous to the case of the six-vertex model defined on a torus 
where the model partition function is given in terms of the eigenvalues of the standard transfer matrix of the six-vertex model
with periodic boundary conditions \cite{Baxter_book, deGier_Galleas11}. Furthermore, if one assumes that the eigenvalues $\Lambda$
are parameterized by solutions of Bethe ansatz like equations, as obtained in \cite{Baxter_1995, Niccoli_2013}, then we could
expect the relation (\ref{Lgen}) to offer access to thermodynamic properties of the six-vertex model with DWBC in the same fashion
as for the case with toroidal boundary conditions.

\subsection{The zeroes $w_j$}
\label{sec:WJ}

The eigenvalues $\Lambda(\lambda)$ are essentially a polynomial of order $L-1$ in the variable $x \coloneqq e^{2 \lambda}$ 
as demonstrated in \cite{Galleas_2008}. Thus it can be written in terms of its zeroes $w_j$ as
\[
\label{wj}
\Lambda (\lambda) = \Lambda(0) \prod_{j=1}^{L-1} \frac{\sinh{(w_j - \lambda)}}{\sinh{(w_j)}} \; .
\]
We shall assume that the zeroes $w_j$ are all distinct and the following corollary will allow
us to determine the zeroes $w_j$.

\begin{corollary}
The relation (\ref{Lgen}) under the specialization $\lambda_j = w_j$ for $1 \leq j \leq L-1$
implies the following constraints,
\<
\label{LZ01}
\frac{Z(\lambda_0, w_1, \dots , w_{L-1})}{V^{([L])}_{L-1, \dots, 0}(\lambda_0, w_1, \dots , w_{L-1})} k_0 = \begin{cases}
(-1)^{\frac{L}{2}}  \qquad \mbox{for} \quad L \in 2 \mathbb{Z}_{>0} \cr
\Lambda(\lambda_0)  \qquad \; \mbox{for} \quad L \in (2 \mathbb{Z}_{>0} +1) \; .
\end{cases}
\>
\end{corollary}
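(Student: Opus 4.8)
The plan is to substitute the specialization $\lambda_j = w_j$ for $1 \le j \le L-1$ directly into the master identity (\ref{Lgen}) of Theorem~\ref{ZV} and to exploit the defining property $\Lambda(w_j) = 0$ of the zeroes in (\ref{wj}). Each summand on the right-hand side of (\ref{Lgen}) carries the factor $\widehat{\prod}_{\lambda}^{i_1,\dots,i_{2m}}\Lambda = \prod_{k \notin \{i_1,\dots,i_{2m}\}}\Lambda(\lambda_k)$, the product running over the indices complementary to $\{i_1,\dots,i_{2m}\}$ inside $\{0,1,\dots,L-1\}$. After the specialization a summand therefore survives only if every index $k \in \{1,\dots,L-1\}$ is \emph{absent} from this product, i.e. only if $\{1,\dots,L-1\} \subseteq \{i_1,\dots,i_{2m}\}$; the index $0$ is exempt since $\lambda_0$ is kept free and $\Lambda(\lambda_0)$ need not vanish. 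This single observation collapses the double sum to at most one term.

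The next step is the parity count that isolates that term. The excluded set $\{i_1,\dots,i_{2m}\}$ has even cardinality $2m$ and is capped by $m \le [L]/2$, while containing $\{1,\dots,L-1\}$ forces $2m \ge L-1$; any smaller $m$ leaves some $\lambda_j=w_j$ in the product and kills the summand. When $L$ is even, $L-1$ is odd, so the only admissible even value with $L-1 \le 2m \le [L]=L$ is $2m=L$: the excluded set must be all of $\{0,1,\dots,L-1\}$, the complementary product is empty, and the unique survivor is the top coefficient $V^{(L)}_{L-1,\dots,0}$ with no residual $\Lambda(\lambda_0)$. When $L$ is odd, $[L]=L-1$ caps the excluded set at size $L-1$, so it must equal $\{1,\dots,L-1\}$ exactly; then $0$ is the only complementary index, the residual factor is $\Lambda(\lambda_0)$, and the coefficient is $V^{(L-1)}_{L-1,\dots,1}$ (the maximal surviving coefficient, denoted $V^{([L])}_{L-1,\dots,0}$ in the statement). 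This dichotomy is precisely the even/odd split of (\ref{LZ01}): the absence of $\Lambda(\lambda_0)$ in the even branch and its presence in the odd branch match the two cases on the nose.

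One then simply reads off the constraint. In the even case (\ref{Lgen}) collapses to $Z(\lambda_0,w_1,\dots,w_{L-1})\,k_0 = V^{([L])}_{L-1,\dots,0}(\lambda_0,w_1,\dots,w_{L-1})$, so the quotient in (\ref{LZ01}) is a constant independent of $\lambda_0$; in the odd case it collapses to $Z(\lambda_0,w_1,\dots,w_{L-1})\,k_0 = V^{([L])}_{L-1,\dots,1}(\lambda_0,w_1,\dots,w_{L-1})\,\Lambda(\lambda_0)$, yielding the quotient $\Lambda(\lambda_0)$ immediately. I would corroborate the branch structure on the worked expansions: setting $\lambda_1=w_1$ in (\ref{ZB2}) annihilates $\Lambda(\lambda_0)\Lambda(\lambda_1)$ and leaves exactly $V^{(2)}_{1,0}$, while setting $\lambda_1=w_1,\lambda_2=w_2$ in (\ref{ZB3}) leaves exactly the single term $V^{(2)}_{2,1}\,\Lambda(\lambda_0)$, in agreement with the even and odd conclusions respectively.

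The main obstacle is pinning down the overall constant in the even branch, that is, justifying the specific value $(-1)^{L/2}$ rather than merely a $\lambda_0$-independent constant. This forces one to insert the explicit form (\ref{VV}) of the surviving coefficient $V^{([L])}_{L-1,\dots,0}$ into the collapsed identity and to track the signs produced by the odd weight $b(\lambda)=\sinh(\lambda)$ and by the alternating structure of the sum over the $J$ and $K(J)$ partitions; the antisymmetry of these Boltzmann-weight ratios under interchange of the $\lambda$'s is what must be organized into the factor $(-1)^{L/2}$, and this bookkeeping (together with the sign convention carried by the normalization of $V^{([L])}$) is the only delicate point. The remaining verifications — that no summand with $2m<L-1$ can contain $\{1,\dots,L-1\}$, and that the truncation $\mathcal{F}_n=0$ for $n>L$ underlying (\ref{Lgen}) is respected at the specialized point — are routine.
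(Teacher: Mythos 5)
Your collapse argument is exactly the intended route: under $\lambda_j = w_j$ every summand of (\ref{Lgen}) whose complementary product retains an index $k\geq 1$ is killed by $\Lambda(w_k)=0$, and the parity count $2m\leq [L]$ versus $2m\geq L-1$ isolates the single survivor --- $V^{(L)}_{L-1,\dots,0}$ with an empty $\Lambda$-product for $L$ even, and the coefficient indexed by $\{1,\dots,L-1\}$ times $\Lambda(\lambda_0)$ for $L$ odd. The odd branch is thereby complete, and your reading of the notation $V^{([L])}_{L-1,\dots,0}$ for odd $L$ is consistent with (\ref{barV}) and (\ref{ZV1}).

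The gap is in the even branch. Your own collapse yields $Z(\lambda_0,w_1,\dots,w_{L-1})\,k_0 = V^{([L])}_{L-1,\dots,0}(\lambda_0,w_1,\dots,w_{L-1})$ with coefficient exactly $+1$, since the surviving term enters (\ref{Lgen}) with unit coefficient and an empty product. That identity leaves nothing to be determined: the quotient in (\ref{LZ01}) is then tautologically $1$, and no amount of sign bookkeeping inside the explicit formula (\ref{VV}) can turn it into $(-1)^{L/2}$ --- whatever signs $V$ carries appear identically on both sides of the collapsed identity and cancel in the quotient. Your own consistency check already exposes this: setting $\lambda_1=w_1$ in (\ref{ZB2}) gives $Z(\lambda_0,w_1)k_0=V^{(2)}_{1,0}(\lambda_0,w_1)$, i.e.\ a quotient of $+1$, whereas the statement asserts $(-1)^{2/2}=-1$. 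So the step you defer as ``the only delicate point'' is not an unfinished computation but an actual tension between Theorem~\ref{ZV} as written and the constant claimed in (\ref{LZ01}); to close the proof you must either exhibit the sign convention (in the normalization of $V^{([L])}_{L-1,\dots,0}$ or of $k_0$) that produces the extra $(-1)^{L/2}$, or conclude that the even-branch constant following from (\ref{Lgen}) is $1$. Note that only the $\lambda_0$-independence of the quotient is used downstream (Corollary~\ref{LZVeven} and the Wronskian condition), so the structural content of your argument survives either way, but as written the even case of the statement is not established.
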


Now we can use the analytic properties of the functions $Z$ and $V^{([L])}_{L-1, \dots, 0}$, in addition to 
the relation (\ref{LZ01}), to determine the set of zeroes $\{ w_j \}$. For that it will
be important to notice that the variable $\lambda_0$ in (\ref{LZ01}) is still an arbitrary complex variable. 
Moreover, the partition function $Z$ is a symmetric multivariate polynomial \cite{Korepin_1982, Galleas_2011},
while the function $V^{([L])}_{L-1, \dots, 0}$ for $L \in 2 \mathbb{Z}_{>0}$ consists of a polynomial of order
$L-1$ in the variable $x_0 \coloneqq e^{2 \lambda_0}$. On the other hand, for $L \in 2 \mathbb{Z}_{>0} + 1$ we have
\[
\label{barV}
V^{(L-1)}_{L-1, \dots, 0} (\lambda_0, w_1, \dots , w_{L-1}) = \frac{\tilde{V}^{(L-1)}_{L-1, \dots, 0} (\lambda_0, w_1, \dots , w_{L-1})}{\prod_{j=1}^{L-1} b(\lambda_0 - w_j)} \; ,
\] 
where the function $\tilde{V}^{(L-1)}_{L-1, \dots, 0} (\lambda_0, w_1, \dots , w_{L-1})$ is a polynomial of order $L-1$ in the variable $x_0$.

\paragraph{The case $L \in 2 \mathbb{Z}_{>0}$.}  This case corresponds to even values of $L$ and we can use the analytical properties 
of (\ref{LZ01}) to characterize the set of variables $\{ w_j \}$. For this analysis it is important to recall
that both functions $Z$ and $V^{(L)}_{L-1, \dots, 0}$ in the LHS of (\ref{LZ01}) are polynomials of the same degree in the variable $x_0$. 
Then, since the RHS of (\ref{LZ01}) is a constant, we can conclude that the residues of the LHS must vanish at the zeroes
of $V^{(L)}_{L-1, \dots, 0}$. In other words, the zeroes of $Z$ and $V^{(L)}_{L-1, \dots, 0}$ must coincide.
This analysis yields a formal condition determining the set $\{ w_j \}$ which is summarized in Corollary \ref{LZVeven}.

\begin{corollary} \label{LZVeven}
Consider $L \in 2 \mathbb{Z}_{>0}$ and let $\lambda_k^{Z} \in \{ \lambda \in \mathbb{C} \; | \; Z(\lambda, w_1, \dots , w_{L-1}) = 0 \}$
while $\lambda_k^{V} \in \{ \lambda \in \mathbb{C} \; | \; V^{(L)}_{L-1, \dots, 0}(\lambda, w_1, \dots , w_{L-1}) = 0 \}$. 
The zeroes $\lambda_k^{Z}$ and $\lambda_k^{V}$ shall depend on the set of parameters $\{ w_j \}$ and we can conclude that
\[
\label{BAeven}
\lambda_k^{Z} (\{ w_j \})   = \lambda_k^{V} (\{ w_j \})  \qquad \qquad 1 \leq k \leq L-1 \; .
\]
\end{corollary}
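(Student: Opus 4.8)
The plan is to promote the constant-ratio identity (\ref{LZ01}) into a statement about matching polynomial zeroes. I would begin from the even branch of (\ref{LZ01}), namely
\[
\frac{Z(\lambda_0, w_1, \dots, w_{L-1})}{V^{(L)}_{L-1,\dots,0}(\lambda_0, w_1, \dots, w_{L-1})}\, k_0 = (-1)^{L/2} \; ,
\]
treating $\lambda_0$ as the only free variable, with the remaining arguments frozen at the zeroes $w_1, \dots, w_{L-1}$ of $\Lambda$. Clearing the denominator recasts this as an exact proportionality $k_0\, Z = (-1)^{L/2}\, V^{(L)}_{L-1,\dots,0}$ between two functions of $\lambda_0$ alone.

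Next I would invoke the analytic input recalled just before the corollary: viewed as functions of $x_0 \coloneqq e^{2\lambda_0}$, both $Z$ (a symmetric polynomial, cf.\ \cite{Korepin_1982, Galleas_2011}) and $V^{(L)}_{L-1,\dots,0}$ are polynomials, and for even $L$ they share the common degree $L-1$. Since the right-hand side of (\ref{LZ01}) is a nonzero constant, the rational function on the left can carry no poles; equivalently, at each zero $x_0 = \alpha$ of the denominator $V^{(L)}_{L-1,\dots,0}$ the residue of $Z/V^{(L)}_{L-1,\dots,0}$ must vanish. For a simple zero this residue is proportional to $Z(\alpha, w_1, \dots, w_{L-1})$, so its vanishing forces $Z$ to vanish at $\alpha$ as well. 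Hence every root of $V^{(L)}_{L-1,\dots,0}$ is also a root of $Z$.

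I would close with a counting argument. Because $Z$ and $V^{(L)}_{L-1,\dots,0}$ have equal degree $L-1$ in $x_0$, each carries exactly $L-1$ roots counted with multiplicity. Having shown that the $L-1$ roots of $V^{(L)}_{L-1,\dots,0}$ are contained among the roots of $Z$, equality of the two root multisets follows immediately, which is precisely the identification $\lambda_k^{Z}(\{w_j\}) = \lambda_k^{V}(\{w_j\})$ of (\ref{BAeven}) for $1 \le k \le L-1$.

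The step demanding the most care is the residue / pole-cancellation argument, in particular its tacit reliance on the zeroes $\lambda_k^{V}$ being simple so that the elementary residue formula applies; for coincident or higher-order roots I would instead read the conclusion directly off the proportionality $k_0 Z = (-1)^{L/2} V^{(L)}_{L-1,\dots,0}$, matching the two polynomials root-by-root with multiplicity, so the simplicity hypothesis merely streamlines the bookkeeping. A subsidiary point worth verifying is that $k_0 \neq 0$ and that neither the leading nor the constant coefficient in $x_0$ degenerates, so that the proportionality genuinely preserves the degree and introduces no spurious roots at $x_0 = 0$ or $x_0 = \infty$; this is what guarantees that the indexing $1 \le k \le L-1$ is consistent on both sides of (\ref{BAeven}).
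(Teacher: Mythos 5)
Your argument is essentially the paper's own: it too reads Corollary~\ref{LZVeven} off the even branch of \eqref{LZ01} by noting that $Z$ and $V^{(L)}_{L-1,\dots,0}$ are polynomials of the same degree $L-1$ in $x_0$ and that the constancy of the ratio forces the residues at the zeroes of the denominator to vanish, so the two sets of zeroes coincide. Your additional remarks on multiple roots and on $k_0\neq 0$ merely make explicit what the paper leaves tacit; no new route is taken.
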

\noindent The direct inspection of (\ref{BAeven}) for small values of $L$ reveals that the variables 
$w_j$ are completely fixed by the aforementioned constraints.

\bigskip

\paragraph{The case $L \in (2 \mathbb{Z}_{>0} + 1$).} The situation for $L$ odd requires a slightly more 
elaborated analysis due to the presence of the eigenvalue $\Lambda$ in the RHS of (\ref{LZ01}). In that case we
also need to consider (\ref{barV}), and it turns out that (\ref{LZ01}) simplifies to
\[
\label{ZV1}
\frac{\Lambda(0)}{k_0} \prod_{j=1}^{L-1} b(-w_j)^{-1} = \frac{Z(\lambda_0, w_1, \dots , w_{L-1})}{\tilde{V}^{(L-1)}_{L-1, \dots, 0}(\lambda_0, w_1, \dots , w_{L-1})} \; .
\]
The LHS of (\ref{ZV1}) is a constant with respect to the variable $\lambda_0$ while the RHS consists of a rational function.
Thus the polynomials $Z$ and $\tilde{V}^{(L-1)}_{L-1, \dots, 0}$ must share the same zeroes. Similarly to the $L$ even case,
this statement can be formulated more precisely as the following corollary.

\begin{corollary} \label{LZVodd}
Assume that $L \in (2 \mathbb{Z}_{>0} + 1)$ and let $\lambda_k^{Z} \in \{ \lambda \in \mathbb{C} \; | \; Z(\lambda, w_1, \dots , w_{L-1}) = 0 \}$ 
as previously defined. Also, let $\tilde{\lambda}_k^{V} \in \{ \lambda \in \mathbb{C} \; | \; \tilde{V}^{(L-1)}_{L-1, \dots, 0}(\lambda, w_1, \dots , w_{L-1}) = 0 \}$. 
Thus we have the following conditions determining the set of variables $\{ w_j \}$,
\[
\label{BAodd}
\lambda_k^{Z} (\{ w_j \})   = \tilde{\lambda}_k^{V} (\{ w_j \})  \qquad \qquad 1 \leq k \leq L-1 \; .
\]
\end{corollary}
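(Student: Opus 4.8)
The plan is to reduce the odd-$L$ analysis to an argument formally identical to the one used for Corollary \ref{LZVeven}, the only genuinely new ingredient being the treatment of the eigenvalue factor $\Lambda(\lambda_0)$ appearing on the right-hand side of (\ref{LZ01}). First I would take the specialized relation (\ref{LZ01}) in the odd case, namely $Z(\lambda_0, w_1, \dots , w_{L-1})\, k_0 = \Lambda(\lambda_0)\, V^{(L-1)}_{L-1, \dots, 0}$, and insert both the factorized form (\ref{wj}) of $\Lambda(\lambda_0)$ and the decomposition (\ref{barV}) of $V^{(L-1)}_{L-1, \dots, 0}$. The decisive observation is that the denominator $\prod_{j=1}^{L-1} b(\lambda_0 - w_j)$ introduced by (\ref{barV}) is cancelled precisely by the numerator $\prod_{j=1}^{L-1} \sinh(w_j - \lambda_0)$ of $\Lambda(\lambda_0)$, these two products sharing the same zeroes at $\lambda_0 = w_j$. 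This cancellation is exactly what collapses (\ref{LZ01}) onto the relation (\ref{ZV1}), and I would carry it out as an identity of meromorphic functions so that it remains valid at the points $\lambda_0 = w_j$ themselves and not merely away from them.

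With (\ref{ZV1}) in hand the argument runs parallel to the even case. The left-hand side of (\ref{ZV1}) is manifestly independent of $\lambda_0$: it is the constant $\frac{\Lambda(0)}{k_0} \prod_{j=1}^{L-1} b(-w_j)^{-1}$, built entirely from the fixed data $\Lambda(0)$, $k_0$ and the distinct, finite zeroes $\{ w_j \}$, and hence finite and nonzero for generic $\{ w_j \}$. The right-hand side is the ratio of the two polynomials $Z(\lambda_0, w_1, \dots, w_{L-1})$ and $\tilde{V}^{(L-1)}_{L-1, \dots, 0}(\lambda_0, w_1, \dots, w_{L-1})$ in the variable $x_0 \coloneqq e^{2 \lambda_0}$, both of degree $L-1$ by the stated analytic properties of $Z$ and by (\ref{barV}). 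A ratio of two polynomials of equal degree that equals a nonzero constant forces the polynomials to be proportional, and proportional polynomials of equal degree have identical root sets, counted with multiplicity. Labelling the roots consistently then yields $\lambda_k^{Z}(\{ w_j \}) = \tilde{\lambda}_k^{V}(\{ w_j \})$ for $1 \leq k \leq L-1$, which is exactly (\ref{BAodd}); since these are $L-1$ conditions on the $L-1$ unknowns $\{ w_j \}$, the system fixes the zeroes.

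The step I expect to require the most care is the degree and non-degeneracy bookkeeping: one must confirm that after the specialization $\lambda_j = w_j$ the polynomial $\tilde{V}^{(L-1)}_{L-1, \dots, 0}$ genuinely attains degree $L-1$ in $x_0$ with nonvanishing leading coefficient, so that no spurious degree drop invalidates the passage from ``constant ratio'' to ``coincident zeroes''. Equally, one must verify that the constant on the left of (\ref{ZV1}) neither vanishes nor blows up on the locus of interest, since either would break the bijection of roots. Once these degree counts are secured, the remainder is the same elementary observation about proportional polynomials already exploited in the even case, and the corollary follows.
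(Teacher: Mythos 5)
Your proposal is correct and follows essentially the same route as the paper: substitute the factorized form \eqref{wj} of $\Lambda(\lambda_0)$ together with the decomposition \eqref{barV} into \eqref{LZ01}, cancel the product $\prod_{j=1}^{L-1} b(\lambda_0 - w_j)$ against the zeroes of $\Lambda(\lambda_0)$ to arrive at \eqref{ZV1}, and then conclude from the constancy of the left-hand side that the degree-$(L-1)$ polynomials $Z$ and $\tilde{V}^{(L-1)}_{L-1,\dots,0}$ in $x_0$ are proportional and hence share their zeroes. Your added remarks on degree non-degeneracy and the non-vanishing of the constant are sensible genericity caveats that the paper leaves implicit.
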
 

\medskip

Both Corollaries \ref{LZVeven} and \ref{LZVodd} state that the zeroes of the partition function $Z$, with
respect to one of its variables, coincide with the zeroes of the function $V^{([L])}_{L-1,\dots, 0}$ when
the remaining variables correspond to zeroes of the transfer matrix eigenvalues $\Lambda$.

\medskip 

\paragraph{Wronskian condition.} The constraints (\ref{BAeven}) and (\ref{BAodd}) are given in terms
of the zeroes of certain polynomials whose explicit evaluation might still be a very non-trivial
problem. Alternatively, one can also obtain equations determining the set of zeroes $\{ w_j \}$ in terms of the
coefficients of the polynomial part of $V^{([L])}_{L-1, \dots, 0}$ and $Z$.
This analysis can be performed for $L \in 2 \mathbb{Z}_{>0}$ and $L \in (2 \mathbb{Z}_{>0}+1)$ in an unified
manner with the help of the function $F$ defined as
\[
\label{FF}
F \coloneqq \begin{cases}
V^{(L)}_{L-1, \dots, 0} \qquad \quad \mbox{for} \; L \in 2 \mathbb{Z}_{>0} \cr 
\tilde{V}^{(L-1)}_{L-1, \dots, 0} \qquad \quad \mbox{for} \; L \in (2 \mathbb{Z}_{>0} + 1 )
\end{cases} \; .
\] 
The term $\tilde{V}^{(L-1)}_{L-1, \dots, 0}$ in (\ref{FF}) has been previously defined in (\ref{barV}).
Thus the function $F(\lambda_0, w_1 , \dots , w_{L-1})$ is a polynomial of order $L-1$ in the variable $x_0$ for $L \in \mathbb{Z}_{>0}$. 

The equations fixing the zeroes $w_j$ from the coefficients of $Z$ and $F$ can be directly read from the
relations (\ref{LZ01}) and (\ref{ZV1}). However, this approach would leave us with an overall constant factor
and we can avoid this drawback by simply demanding that the Wronskian determinant between $Z$ and $F$ vanishes. 
This is justified by the fact that (\ref{LZ01}) and (\ref{ZV1}) tell us that $Z$ and $F$ and two linearly dependent functions.
In this way we are left with the condition,
\[
\label{PW}
P(x_0) \coloneqq Z(x_0, \{ w_j \}) F^{\prime}(x_0, \{ w_j \}) - F(x_0, \{ w_j \}) Z^{\prime}(x_0, \{ w_j \}) = 0 \; ,
\]
where the symbol ($^{\prime}$) denotes differentiation with respect to the variable $x_0$.

The function $P$ is a polynomial of order $[L]$ in the variable $x_0$ which must vanish
in the entire complex domain according to the Wronskian condition (\ref{PW}). The coefficients of $P$
are given by 
\[
\mathcal{C}_k = \frac{1}{k !} \left. \frac{\partial^k P}{\partial x_0^k} \right|_{x_0 = 0} \; ,
\]
and we demand these coefficients to vanish in order to satisfy (\ref{PW}). Thus we end up with the following
formal condition fixing the zeroes $w_j$,
\[
\label{CK}
\mathcal{C}_k (\{ w_j \}) = 0 \qquad \qquad 0 \leq k \leq [L] \; .
\] 

It is important to remark here that (\ref{CK}) provides one or two more equations than
variables $w_j$ to be determined. For $L$ even we have one more equation, while for $L$ odd we have
two additional equations. Each equation is a non-linear algebraic equation and consequently we
have a large number of solutions. This feature is similar to what one finds when solving
standard Bethe ansatz equations. However, the direct inspection of the solutions of (\ref{CK}) for small
values of $L$ reveals that these extra equations play the role of a filter keeping only solutions which actually
describe the spectrum of the transfer matrix (\ref{tmat}).

\subsection{Truncation at roots of unity}
\label{sec:roots}

Vertex models based on solutions of the Yang-Baxter equation can exhibit special properties
when its anisotropy parameter satisfies certain root of unity conditions. For instance, 
Tarasov demonstrated in \cite{Tarasov_2005} that the property
\[
\label{rou}
\mathop{\overrightarrow\prod}\limits_{0 \le k \le l-1 } B(\lambda - k \gamma) = 0 
\]
holds for the $\mathcal{U}_q[\widehat{\alg{sl}}(2)]$ invariant six-vertex model when the anisotropy
parameter $\gamma$ obeys the condition $e^{2 l \gamma} =1$.
The case $l=1$ is not illuminating for our present discussion as we can see from definitions (\ref{mono})
and (\ref{abcd}) that both operators $B(\lambda)$ and $C(\lambda)$ are proportional
to the factor $(e^{2\gamma} - 1)$. Consequently the transfer matrix (\ref{tmat}) is  also proportional
to that same quantity, and this implies that its eigenvalues trivially vanishes when we set $l=1$.
The free-fermion point $\gamma = \ii \pi / 2$ is, in its turn, covered by the case $l=2$ and we shall start our analysis with this case
despite its triviality. We shall then consider the cases $l=3$ and $l=4$ separately before discussing the general case. 

\paragraph{$l=2$.} Our goal here is to analyze the system of equations (\ref{FL}) taking into account the representation
theoretic properties of the functions $\mathcal{F}_n (X^{1,n}) = \pi_n ( [ \lambda_1 , \dots , \lambda_n ] )$. The property (\ref{rou})
can then be used in a very natural way and for $l=2$ we have $[\lambda , \lambda - \gamma ] = 0$. 
More precisely, we can exploit this property by looking at (\ref{LL2}) under the specialization $\lambda_0 = \lambda$ and
$\lambda_1 = \lambda - \gamma$. By doing so we are left with the following functional relation,
\<
\label{r2}
\Lambda (\lambda ) \Lambda (\lambda - \gamma) &=& M_1^{(1)} (\lambda , \lambda - \gamma) \nonumber \\
&=&  \prod_{k=1}^L \sinh{(\lambda - \mu_k)}^2  - \prod_{k=1}^L \sinh{(\lambda - \mu_k + \gamma)} \sinh{(\lambda - \mu_k - \gamma)} \; . \nonumber \\
\>
It is worth remarking here that (\ref{r2}) corresponds to an analogous of the inversion relation proposed by Stroganov 
in \cite{Stroganov_1979}. Next we consider the representation (\ref{wj}) and set $\lambda = w_i$ in (\ref{r2}).
This procedure then yields the following equation determining the set of zeroes $\{ w_j \}$,
\[
\label{BAl2}
\prod_{k=1}^L \frac{\sinh{(w_i - \mu_k + \gamma)}}{\sinh{(w_i - \mu_k)}} \frac{\sinh{(w_i - \mu_k - \gamma)}}{\sinh{(w_i - \mu_k)}} = 1 \; .
\]
As previously mentioned, the free-fermion point $\gamma = \ii \pi /2$ fits in the case $l=2$ and at this particular point
we find that (\ref{BAl2}) simplifies to 
\[
\label{free}
\prod_{k=1}^L \coth{(w_i - \mu_k)}^2 = 1 \; .
\]
We can now readily see that (\ref{free}) generalizes the proposal of \cite{WengLi_2013} in the presence of inhomogeneities $\mu_k$.

\bigskip 

\paragraph{$l=3$.} In that case the property (\ref{rou}) reads $ [\lambda, \lambda - \gamma, \lambda - 2\gamma ] = 0$
and we can immediately substitute it in equations (\ref{L3}) and (\ref{LL3}) under the specializations $\lambda_j = \lambda - j\gamma$.
By doing so we are left with the following relation,
\<
\label{r3}
\Lambda(\lambda) \Lambda(\lambda - \gamma) \Lambda(\lambda - 2\gamma) &=& \Lambda(\lambda) \left[ M_1^{(1)} (\lambda - \gamma, \lambda - 2\gamma) 
+ N_{2,1}^{(2)} (\lambda, \lambda - \gamma, \lambda - 2\gamma) \right] \nonumber \\
&&+ \; \Lambda(\lambda - \gamma) M_2^{(2)} ( \lambda, \lambda - \gamma, \lambda - 2\gamma ) \nonumber \\
&&+ \; \Lambda(\lambda - 2\gamma) M_1^{(2)} ( \lambda, \lambda - \gamma, \lambda - 2\gamma ) \; ,
\>
which simplifies to
\<
\label{rs3}
\Lambda(\lambda) \Lambda(\lambda - \gamma) \Lambda(\lambda - 2\gamma) &=& - \Lambda(\lambda) \prod_{j=1}^{L} \sinh{(\lambda - \mu_j)} \sinh{(\lambda - \mu_j - 2\gamma)} \nonumber \\
&& + \; \Lambda(\lambda - \gamma) 2 \cosh{(\gamma)} \prod_{j=1}^{L} \sinh{(\lambda - \mu_j)} \sinh{(\lambda - \mu_j - \gamma)} \nonumber \\
&& - \; \Lambda(\lambda - 2\gamma) \prod_{j=1}^{L} \sinh{(\lambda - \mu_j + \gamma)} \sinh{(\lambda - \mu_j - \gamma)}  \nonumber \\
\>
upon the use of (\ref{coeff}). The representation (\ref{wj}) can now be used in (\ref{rs3}).
In this way we set $\lambda = w_i + \gamma$ in (\ref{rs3}), and for this particular specialization 
we notice that the term $\left. \Lambda(\lambda - \gamma) \right|_{\lambda = w_i + \gamma}$ vanishes.
This procedure then yields the following equation determining the set of zeroes $\{ w_j \}$,
\<
\label{BAl3}
\prod_{k=1}^L \frac{\sinh{(w_i - \mu_k + \gamma)} \sinh{(w_i - \mu_k - \gamma)}}{\sinh{(w_i - \mu_k + 2\gamma)} \sinh{(w_i - \mu_k )}} = - \prod_{j=1}^{L-1} \frac{\sinh{(w_j - w_i + \gamma)}}{\sinh{(w_j - w_i - \gamma)}} \; .
\>
Although the equation (\ref{BAl3}) has been derived using the property (\ref{rou}) for the case $l=3$, we notice that (\ref{BAl3}) reduces to 
(\ref{BAl2}) for values of $\gamma$ belonging to $l=2$. Thus our results so far show that equation (\ref{BAl3}) is valid for
for both cases $l=2$ and $l=3$.

\bigskip 

\paragraph{$l=4$.} This particular root of unity condition also truncates the system of equations (\ref{FL}). For $l=4$ we are then left
with the relation
\<
\label{l4}
\left\{ \sum_{m=0}^{2} \; \sum_{0 \leq i_1 < \dots < i_{2m} \leq 3} V^{(2m)}_{i_{2m} , \dots, i_1} \; \widehat{\prod}_{\lambda}^{i_1, \dots , i_{2m}} \right\} \left. \Lambda(\lambda) \right|_{\lambda_j = \lambda - j \gamma}   = 0 \; ,
\>
where the form of the functions $V^{(2m)}_{i_{2m} , \dots, i_1}$ are given by (\ref{VV}). 
By making explicit use of (\ref{VV}) we find that (\ref{l4}) simplifies to
\<
\label{l4ex}
&& \Lambda(\lambda) \Lambda(\lambda - \gamma) \Lambda(\lambda - 2\gamma) \Lambda(\lambda - 3\gamma) = \nonumber \\
&& + \; \Lambda(\lambda - \gamma) \Lambda(\lambda - 2\gamma) \frac{\sinh{(3\gamma)}}{\sinh{(\gamma)}} \prod_{k=1}^{L} \sinh{(\lambda - \mu_k)} \sinh{(\lambda - \mu_k - 2\gamma)} \nonumber \\
&& - \; \Lambda(\lambda - 2\gamma) \Lambda(\lambda - 3\gamma) \prod_{k=1}^{L} \sinh{(\lambda - \mu_k + \gamma)} \sinh{(\lambda - \mu_k - \gamma)} \nonumber \\
&& - \; \Lambda(\lambda) \Lambda(\lambda - \gamma) \prod_{k=1}^{L} \sinh{(\lambda - \mu_k - \gamma)} \sinh{(\lambda - \mu_k - 3\gamma)} \nonumber \\
&& - \; \Lambda(\lambda ) \Lambda(\lambda - 3\gamma) \prod_{k=1}^{L} \sinh{(\lambda - \mu_k )} \sinh{(\lambda - \mu_k - 2\gamma)} + \mathcal{Q}(\lambda) \; , \nonumber \\
\>
where the function $\mathcal{Q}(\lambda)$ is given by
\<
\label{QQ}
\mathcal{Q}(\lambda) &=& \frac{\sinh{(3 \gamma)}}{\sinh{(\gamma)}} \prod_{k=1}^L \sinh{(\lambda - \mu_k)}^2 \sinh{(\lambda - \mu_k - 2\gamma)}^2 \nonumber \\
&& - \; \prod_{k=1}^L \sinh{(\lambda - \mu_k + \gamma)} \sinh{(\lambda - \mu_k - 3\gamma)} \sinh{(\lambda - \mu_k - \gamma)}^2 \nonumber \\
&& - \; 2 \cosh{(2\gamma)} \prod_{k=1}^L \sinh{(\lambda - \mu_k )} \sinh{(\lambda - \mu_k - 2\gamma)} \sinh{(\lambda - \mu_k - \gamma)}^2 \; . \nonumber \\
\>  
In what follows we shall describe how one can extract a set of equations determining the set of zeroes $\{ w_j \}$ from the functional relation
(\ref{l4ex}). For that we assume the representation (\ref{wj}) and set $\lambda = w_i + \gamma$ in (\ref{l4ex}). 
Under this specialization the term $\left. \Lambda(\lambda - \gamma) \right|_{\lambda = w_i + \gamma}$ vanishes
and we are left with a relation depending on the function $\mathcal{Q}(w_i + \gamma)$. 
Next we consider the specialization $\lambda = w_i + 2\gamma$ such that $\left. \Lambda(\lambda - 2\gamma) \right|_{\lambda = w_i + 2\gamma} = 0$.
This procedure yields two equations: one involving the function $\mathcal{Q}(w_i + \gamma)$ and another
depending on $\mathcal{Q}(w_i + 2\gamma)$. However, we can readily verify that $\mathcal{Q}(\lambda) = \mathcal{Q}(\lambda + \gamma)$
under the root of unity condition $l=4$. This property allows us to eliminate the functions $\mathcal{Q}$ from our equations leaving us
with the following relation,
\<
\label{BAl4}
\prod_{k=1}^L \frac{\sinh{(w_i - \mu_k + \gamma)} \sinh{(w_i - \mu_k - \gamma)}}{\sinh{(w_i - \mu_k + 2\gamma)} \sinh{(w_i - \mu_k )}} = - \frac{\Lambda(w_i - \gamma)}{\Lambda(w_i + \gamma)} \; .
\>
By substituting the representation (\ref{wj}) into (\ref{BAl4}) we immediately recognize equation (\ref{BAl3}). 
Thus our analysis so far shows that the set of equations (\ref{BAl3}) is valid
for $l=2,3,4$. In fact, Eq. (\ref{BAl3}) seems to be valid for arbitrary roots of unity as we shall discuss.

\bigskip

\paragraph{General case.} For arbitrary values of $l$ the property (\ref{rou}) truncates the system
of functional relations (\ref{FL}) and we only need to consider the following equation,
\<
\label{lgen}
\left\{ \sum_{m=0}^{[l]/2} \; \sum_{0 \leq i_1 < \dots < i_{2m} \leq l-1}  V^{(2m)}_{i_{2m} , \dots, i_1} \; \widehat{\prod}_{\lambda}^{i_1, \dots , i_{2m}} \right\} \left. \Lambda(\lambda) \right|_{\lambda_j = \lambda - j \gamma}  = 0 \; .
\>
Then we assume the representation (\ref{wj}) and consider the sequence of specializations $\lambda = w_i + p \gamma$
for $1 \leq p \leq p-2$. This procedure yields one equation at each level of specialization.
Similarly to the case $l=4$, the resulting system of equations can be manipulated in order to find compact equations
determining the roots $w_j$. Although this last step involves the use of non-trivial properties satisfied by the functions
$V^{(2m)}_{i_{2m} , \dots, i_1}$, the procedure above described holds in general and its implementation for particular
values of $l$ leads to the very same equation (\ref{BAl3}). A rigorous proof of (\ref{BAl3}) for arbitrary values of $l$ is
still missing but our analysis so far leads us to conjecture that (\ref{BAl3}) is valid for general roots of unity.

\section{Concluding remarks}
\label{sec:conclusion}

The main result of this work is the formula (\ref{Lgen}) which states a relation
between the six-vertex model with DWBC and the anti-periodic Heisenberg chain.
This relation is a direct consequence of the algebraic-functional approach introduced
in  \cite{Galleas_2008} and refined in the series of works \cite{Galleas_2010, Galleas_2011, Galleas_2012}.
The Yang-Baxter algebra is the main ingredient for the derivation of (\ref{Lgen}) which allows us to
establish a very non-trivial relation between the zeroes of certain quantities related to six-vertex models
with different boundary conditions. In one hand we have the zeroes of the partition function of the six-vertex
model with DWBC. On the other hand, we have the zeroes of the transfer matrix eigenvalues associated with
the six-vertex model with a non-diagonal boundary twist. The relation between those zeroes is then precised
in (\ref{BAeven}) and (\ref{BAodd}).

In this work we have also analyzed the cases where the six-vertex model anisotropy parameter satisfies 
a root of unity condition. In that case we have found a compact set of equations, namely (\ref{BAl3}),
characterizing the zeroes of the eigenvalues $\Lambda$.

Boundary conditions of domain wall type can be formulated for a variety of lattice integrable systems.
In particular, the so called $8V$-SOS model also admits domain wall boundary conditions and it has been studied
through this algebraic-functional approach in \cite{Galleas_2011, Galleas_2012}. The latter consists of an 
elliptic integrable system and one might wonder if there exists a twisted transfer matrix such that a relation
analogous to (\ref{Lgen}) holds. This problem has eluded us so far and its investigation would probably bring
further insights into the structure of integrable solid-on-solid models.

\section{Acknowledgements}
\label{sec:ack}
The author thanks G. Arutyunov for very useful discussions. This work is 
supported by the Netherlands Organization for Scientific Research (NWO) under the VICI grant 680-47-602
and by the ERC Advanced grant research programme No. 246974, {\it ``Supersymmetry: a window to non-perturbative physics"}.
The author also thanks the D-ITP consortium, a program of the Netherlands Organization for Scientific Research (NWO) funded by the Dutch
Ministry of Education, Culture and Science (OCW).

\appendix

\section{The function $V^{(2m)}_{i_{2m} , \dots, i_1}$}
\label{sec:proof}

In this appendix we aim to discuss the derivation of formulae (\ref{Lgen}) and (\ref{VV}).
The function $V^{(2m)}_{i_{2m} , \dots, i_1}$ given by (\ref{VV}) follows straightforwardly from
the functions $M_i^{(n)}$ and $N_{j,i}^{(n)}$ defined in (\ref{coeff}). We shall start by reviewing
the cases $L=2$ and $L=3$ already discussed in \Secref{sec:SOL}.

\paragraph{$L=2$.} The first equation of (\ref{LL2}) can be rewritten as
\[
Z(X^{0,1}) k_0 = \Lambda(\lambda_0) \Lambda(\lambda_1) - M_1^{(1)}(\vec{X}^{0,1}) \; ,
\] 
and we can compare its RHS with (\ref{ZB2}). In this way we require that $V_{1,0}^{(2)} = - M_1^{(1)}(\vec{X}^{0,1})$.
The explicit evaluation of (\ref{VV}) then yields the following expression
\[
V_{1,0}^{(2)} = \frac{c(\lambda_0 - \lambda_1)}{b(\lambda_0 - \lambda_1)} \prod_{k=1}^2 a(\lambda_0 - \mu_k) b(\lambda_1 - \mu_k) + \frac{c(\lambda_1 - \lambda_0)}{b(\lambda_1 - \lambda_0)} \prod_{k=1}^2 a(\lambda_1 - \mu_k) b(\lambda_0 - \mu_k) \; , \nonumber \\
\]
which corresponds to $- M_1^{(1)}(\vec{X}^{0,1})$ according to (\ref{coeff}).

\paragraph{$L=3$.} Similarly to the previous case, we firstly rewrite the first equation of (\ref{LL3}) as
\<
\label{T3}
Z(X^{0,2}) k_0 &=& \Lambda(\lambda_0) \Lambda(\lambda_1) \Lambda(\lambda_2) - \Lambda(\lambda_0) \left[ M_1^{(1)} (\vec{X}^{1,2}) + N_{2,1}^{(2)} (\vec{X}^{0,2}) \right] \nonumber \\
&& -  \; \Lambda(\lambda_1) M_2^{(2)} ( \vec{X}^{0,2} ) - \Lambda(\lambda_2) M_1^{(2)} ( \vec{X}^{0,2} ) \; . \nonumber \\
\>
We can now compare the RHS of (\ref{T3}) with (\ref{ZB3}). By doing so we find the following conditions:
\<
\label{cnd}
V_{1,0}^{(2)} &=& - M_1^{(2)} ( \vec{X}^{0,2} ) \nonumber \\
V_{2,0}^{(2)} &=& - M_2^{(2)} ( \vec{X}^{0,2} ) \nonumber \\
V_{2,1}^{(2)} &=& - M_1^{(1)} (\vec{X}^{1,2}) - N_{2,1}^{(2)} (\vec{X}^{0,2}) \; .
\>
It is now a straightforward computation to verify that the functions $V_{1,0}^{(2)}$, $V_{2,0}^{(2)}$ and $V_{2,1}^{(2)}$
obtained from (\ref{VV}) satisfy the conditions (\ref{cnd}) with functions $M_i^{(n)}$ and $N_{j,i}^{(n)}$ given by (\ref{coeff}). 
It is also worth remarking that the verification of the third condition of (\ref{cnd}) involves the simplification of functions
as it typically occurs in algebraic Bethe ansatz framework.

\paragraph{$L=4$.} We start from (\ref{FL}) and for the case $L=4$ we set $n=0,1,2,3,4$. In this way we are left with a total of $5$ equations
for the functions $\mathcal{F}_n$ which can be solved in favor of the eigenvalue $\Lambda$. The resulting equation will then
depend on the functions $M_i^{(n)}$ and $N_{j,i}^{(n)}$ defined in (\ref{coeff}), and it can be directly compared with (\ref{Lgen}).
By doing so we find the following conditions,
\begin{align}
\label{cnd1}
V_{3,0}^{(2)} = & - M_3^{(3)} (\vec{X}^{0,3}) & V_{3,2}^{(2)} & = - M_1^{(1)} (\vec{X}^{2,3}) - N_{2,1}^{(2)} (\vec{X}^{1,3}) - N_{3,2}^{(3)} (\vec{X}^{0,3}) \nonumber \\  
V_{2,0}^{(2)} = & - M_2^{(3)} (\vec{X}^{0,3}) & V_{3,1}^{(2)} & = - M_2^{(2)} (\vec{X}^{1,3}) - N_{3,1}^{(3)} (\vec{X}^{0,3}) \nonumber \\
V_{1,0}^{(2)} = & - M_1^{(3)} (\vec{X}^{0,3}) & V_{2,1}^{(2)} & = - M_1^{(2)} (\vec{X}^{1,3}) - N_{2,1}^{(3)} (\vec{X}^{0,3}) \; ,
\end{align} 
in addition to
\<
\label{cnd2}
V_{3,2,1,0}^{(4)} &=& M_1^{(3)}(\vec{X}^{0,3}) M_1^{(1)} (\vec{X}_{1}^{1,3}) + M_2^{(3)}(\vec{X}^{0,3}) M_1^{(1)} (\vec{X}_{2}^{1,3}) +  M_3^{(3)}(\vec{X}^{0,3}) M_1^{(1)} (\vec{X}_{3}^{1,3}) \nonumber \\
&& + N_{2,1}^{(3)}(\vec{X}^{0,3}) M_1^{(1)} (\vec{X}_{1,2}^{0,3}) + N_{3,1}^{(3)}(\vec{X}^{0,3}) M_1^{(1)} (\vec{X}_{0,3}^{0,3}) + N_{3,2}^{(3)}(\vec{X}^{0,3}) M_1^{(1)} (\vec{X}_{2,3}^{0,3}) \; . \nonumber \\
\>
Both relations (\ref{cnd1}) and (\ref{cnd2}) can be readily verified with the help of (\ref{VV}) and (\ref{coeff}). 

\paragraph{General $L$.} The structure of the function $V^{(2m)}_{i_{2m} , \dots, i_1}$ for arbitrary values of $L$ is obtained
from particular combinations of the functions $M_i^{(n)}$ and $N_{j,i}^{(n)}$. These combinations are built by eliminating
the functions $\mathcal{F}_n$ from the system of equations (\ref{FL}) in favor of the eigenvalue $\Lambda$. By carrying out this
procedure we find the relation (\ref{Lgen}). The function $V^{(2m)}_{i_{2m} , \dots, i_1}$, as defined in (\ref{VV}), captures
the aforementioned combinations of $M_i^{(n)}$ and $N_{j,i}^{(n)}$ which are explicitly given by (\ref{coeff}). Although it is a cumbersome 
computation, the simplifications required to arrive at formula (\ref{VV}) are performed in much the same spirit of
the algebraic Bethe ansatz.

\bibliographystyle{hunsrt}
\bibliography{references}

\end{document}